\documentclass[11pt]{article}
\usepackage{amsmath,amssymb,amsthm,amsfonts,fullpage,stackrel}
\usepackage{MnSymbol}

\newcommand{\map}[3]{{{#1}:{#2}\rightarrow{#3}}}

\newcommand{\complexes}{\mathbb{C}}
\newcommand{\cH}{\mathcal{H}}
\newcommand{\bra}[1]{{\langle{#1}|}}

\newcommand{\ket}[1]{{|{#1}\rangle}}

\newcommand{\braket}[2]{{\langle{#1}|{#2}\rangle}}
\newcommand{\matelt}[3]{{\langle{#1}|{#2}|{#3}\rangle}}
\newcommand{\disjunion}{\cupdot}
\newcommand{\p}{\varphi}
\newcommand{\1}{\mathbf{1}}
\newcommand{\csign}{\mathit{CZ}}

\theoremstyle{definition}
\newtheorem{definition}{Definition}[section]

\newtheorem{remark}[definition]{Remark}

\theoremstyle{plain}
\newtheorem{theorem}[definition]{Theorem}
\newtheorem{lemma}[definition]{Lemma}

\title{A note on the entangling properties of the C-SIGN and related quantum gates}
\author{Stephen A. Fenner\thanks{Computer Science and Engineering Department, Columbia, SC 29208 USA. \texttt{fenner.sa@gmail.com}}\\University of South Carolina}

\begin{document}

\maketitle

\begin{abstract}
We show that any essential application of an $n$-qubit C-SIGN or related quantum gate $G$ leaves its qubits everywhere entangled, provided they were not everywhere entangled to begin with.  By ``essential'' we mean roughly that $G$ is not equivalent to a similar gate of smaller arity (or to the identity) when applied to the same input state.  By ``everywhere entangled'' we mean that the state is not separable with respect to any bipartition of the qubits.
\end{abstract}

\section{Preliminaries}

Following standard practice, we let $[n]$ denote the set $\{1,\ldots,n\}$ for any integer $n \ge 0$.  We write $z^*$ for the complex conjugate of a complex number $z$, and we write $A^*$ for the adjoint (Hermitian conjugate) of an operator $A$ on a Hilbert space.  Otherwise, our notation is fairly standard (see \cite{KLM:quantum-book,KSV:quantum-book,NC:quantumbook} for example).

The C-SIGN gate on $k$ qubits is a unitary operator $\csign$ defined thus for every computational basis state $\ket{x_1x_2\cdots x_k}$, for $x_1,x_2,\ldots,x_k\in\{0,1\}$:
\[ \csign\ket{x_1x_2\cdots x_k} = (-1)^{x_1\cdots x_k}\ket{x_1x_2\cdots x_k}\;. \]
Generalizing this definition a bit, for any $\eta\in\complexes$ such that $|\eta| = 1$ and $\eta \ne 1$, we define the unitary gate $G_\eta$ by
\[ G_\eta\ket{x_1x_2\cdots x_k} = \eta^{x_1\cdots x_k}\ket{x_1x_2\cdots x_k}\;. \]
$G_\eta$ is represented in the computational basis by a diagonal matrix, and it has two eigenspaces: the one-dimensional subspace $E :=\{a\ket{1\cdots 1} : a\in\complexes\}$ with eigenvalue $\eta$, and its orthogonal complement $E^\perp$ with eigenvalue $1$.  $E^\perp$ is spanned by those basis vectors with at least one $0$ in their corresponding strings.  Note that $G_\eta$ is unitary, that $G_\eta^* = G_{\eta^*}$, and that $G_\eta$ commutes with the swap operator on any pair of its qubits.

We now fix $\cH$ to be the $n$-qubit Hilbert space, for some $n > 0$.  We let the qubits of $\cH$ have indices from $1$ to $n$.  The computational basis of $\cH$ is thus $\left\{ \ket{x} \mid \map{x}{[n]}{\{0,1\}}\right\}$, indexed by binary strings of length $n$.  Fix any subset $S\subseteq [n]$.  We let $\cH_S$ denote the Hilbert space of the qubits in $S$ (or, more strictly speaking, the qubits whose indices are in $S$).  So for example, $\cH \cong \cH_S \otimes \cH_{\overline S}$, where we write $\overline S$ for $[n]\setminus S$.  Similarly, if $\map{x}{[n]}{\{0,1\}}$ is any length-$n$ binary string, we let $x_{|S}$ denote the restriction of $x$ to $S$, and for $i\in [n]$ we write $x_i$ for $x_{|\{i\}}$.  We use the term, ``string'' to refer generally to $0,1$-valued maps whose domains are arbitrary subsets of $[n]$.  If we do not specify the domain of a string, we assume it is $[n]$.

We let $\1$ denote the string of $n$ many $1$'s, i.e., the constant $1$-valued string with domain $[n]$.

If $\map{y}{J}{\{0,1\}}$ and $\map{z}{K}{\{0,1\}}$ are strings for disjoint sets $J,K\subseteq [n]$, then we write $y\cup z$ for the unique string with domain $J\cup K$ extending $y$ and $z$.  So in particular, for any $S\subseteq [n]$, if $\ket{y}$ is a computational basis state of $\cH_S$ and $\ket{z}$ is a computational basis state of $\cH_{\overline S}$, then $\ket{y\cup z}$ is the computational basis state of $\cH$ corresponding to $\ket{y}\otimes\ket{z}$.

\begin{definition}
Let $S\subseteq [n]$ and let $\ket{\psi}\in\cH$ be any unit vector.  We say that $\ket{\psi}$ is \emph{$S$-separable} if there exists a bipartition $[n] = A\disjunion B$ into sets $A$ and $B$ such that $A\cap S \ne \emptyset$ and $B\cap S \ne \emptyset$ and $\ket{\psi} = \ket{\psi}_A \otimes \ket{\psi}_B$ for some unit vectors $\ket{\psi}_A \in \cH_A$ and $\ket{\psi}_B \in \cH_B$.  Otherwise, we say that $\ket{\psi}$ is \emph{$S$-entangled}.
\end{definition}

\section{Main Result}

Our main result is Theorem~\ref{thm:main}, below.  A weaker version of this result was proved independently by Daniel Grier, Thomas Thierauf, and myself (via private communications).

We now fix for the entire sequel some arbitrary $\eta\in\complexes$ such that $|\eta|=1$ and $\eta\ne 1$.

\begin{definition}
For any set $S\subseteq [n]$, let $G_S$ be the $G_\eta$ gate applied to the qubits in $S$ (which means that $G_S$ is an operator on $\cH_S$).  Similarly, let $I_S$ be the identity operator applied to the qubits in $S$.  (If $S=\emptyset$, then $\cH_S$ has dimension~$1$ and we define $G_S := \eta I_S$ by convention.)
\end{definition}

\begin{definition}
Let $S\subseteq [n]$ be nonempty, and let $G := G_S\otimes I_{\overline{S}}$ (so $G$ is an operator on $\cH$).  Let $\ket{\psi}\in\cH$ be some unit vector.  We say that $G$ \emph{simplifies} on $\ket{\psi}$ if either $G\ket{\psi} = \ket{\psi}$ or $G\ket{\psi} = (G_{S'}\otimes I_{\overline{S'}})\ket{\psi}$ for some proper subset $S'\subset S$.
\end{definition}

There are two ways that $G$ can simplify on $\ket{\psi}$: either $\braket{x}{\psi} = 0$ for every string $x$ such that $x_{|S} = \1_{|S}$ (whence $G\ket{\psi} = \ket{\psi}$), or there exists $i\in S$ such that, for all strings $x$ with $x_i = 0$, we have $\braket{x}{\psi} = 0$.  In the former case, $\ket{\psi}$ is an eigenvector of $G$ with eigenvalue $1$ and so $G\ket{\psi} = \ket{\psi}$; every computational basis vector appearing in the expansion of $\ket{\psi}$ (as a linear combination of computational basis vectors) has a $0$ somewhere in $S$.  This $0$ turns off $G$ completely.  In the latter case, $G\ket{\psi} = (G_{S\setminus\{i\}} \otimes I_{\overline S \cup \{i\}})\ket{\psi}$, that is, $G$ acts the same as a smaller $G_\eta$ gate applied to all qubits in $S$ except the $i^\text{th}$.

\begin{theorem}\label{thm:main}
Let $S\subseteq [n]$ have size at least~$2$, and let $G := G_S\otimes I_{\overline{S}}$.  Let $\ket{\psi}\in\cH$ be any unit vector.  Then at least one of the following is true: (1) $\ket{\psi}$ is $S$-entangled; (2) $G\ket{\psi}$ is $S$-entangled; or (3) $G$ simplifies on $\ket{\psi}$.
\end{theorem}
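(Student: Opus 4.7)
The plan is to suppose $\ket{\psi}$ is $S$-separable via a bipartition $A \disjunion B$, so $\ket{\psi} = \ket{\alpha}_A \otimes \ket{\beta}_B$ with $S_A := S \cap A$ and $S_B := S \cap B$ both nonempty, and to show that if $G$ does not simplify on $\ket{\psi}$ then $G\ket{\psi}$ is $S$-entangled. Setting $P_A := \ket{\1}\bra{\1}_{S_A} \otimes I_{A \setminus S_A}$, $\ket{\alpha_1} := P_A \ket{\alpha}$, $\ket{\alpha_0} := (I - P_A)\ket{\alpha}$ (so $\ket{\alpha} = \ket{\alpha_0} + \ket{\alpha_1}$ orthogonally), and analogously for $\ket{\beta}$, the identity $G = I + (\eta - 1)(P_A \otimes P_B)$ yields
\[
G\ket{\psi} = \ket{\alpha_0}\ket{\beta_0} + \ket{\alpha_0}\ket{\beta_1} + \ket{\alpha_1}\ket{\beta_0} + \eta\,\ket{\alpha_1}\ket{\beta_1}.
\]

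First I would show $G\ket{\psi}$ is not $A|B$-separable. If $\ket{\alpha_1} = 0$ or $\ket{\beta_1} = 0$, then $P\ket{\psi} = 0$ and $G\ket{\psi} = \ket{\psi}$, the first kind of simplification; if $\ket{\alpha_0} = 0$, then $\ket{\alpha}$ lies in $\ket{\1}_{S_A} \otimes \cH_{A \setminus S_A}$, forcing every basis string in $\ket{\psi}$ to have $x_i = 1$ for every $i \in S_A$, the second kind of simplification (and similarly if $\ket{\beta_0} = 0$). Nonsimplification thus makes all four summands nonzero. Because $\{\ket{\alpha_0}, \ket{\alpha_1}\}$ and $\{\ket{\beta_0}, \ket{\beta_1}\}$ are orthogonal, hence linearly independent, the Schmidt rank of $G\ket{\psi}$ with respect to $A|B$ equals the rank of the coefficient matrix $\bigl(\begin{smallmatrix} 1 & 1 \\ 1 & \eta \end{smallmatrix}\bigr)$, whose determinant is $\eta - 1 \ne 0$; so the Schmidt rank is $2$ and $G\ket{\psi}$ is $A|B$-entangled.

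Next I would rule out $A'|B'$-separability of $G\ket{\psi}$ for every valid bipartition. Assume for contradiction $G\ket{\psi} = \ket{\phi}_{A'} \otimes \ket{\chi}_{B'}$, and refine $[n]$ into the four cells $A_1 := A \cap A'$, $A_2 := A \cap B'$, $B_1 := B \cap A'$, $B_2 := B \cap B'$, writing $x = (a, c, b, d)$ accordingly and letting $p, q, r, s$ denote the products of $S$-bits on these cells (so $\prod_{i \in S} x_i = p(a)\,q(c)\,r(b)\,s(d)$). The separability identity becomes
\[
\eta^{p(a)\,q(c)\,r(b)\,s(d)}\, \alpha(a, c)\, \beta(b, d) = \phi(a, b)\, \chi(c, d).
\]
Since $S \cap A' \ne \emptyset$, nonsimplification forces some $(a_0, b_0)$ with $p(a_0)\,r(b_0) = 0$ and $\phi(a_0, b_0) \ne 0$: otherwise $\ket{\phi}$ would be supported only on strings with $x_i = 1$ for every $i \in S \cap A'$, and since $G$ is diagonal with nonzero entries (so the support of $G\ket{\psi}$ equals that of $\ket{\psi}$), $\ket{\psi}$ would share that support property --- a simplification of the second kind via any $i \in S \cap A'$. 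Evaluating the identity at $(a_0, b_0)$ trivializes the $\eta$ factor and gives $\chi(c, d) = \alpha(a_0, c)\,\beta(b_0, d) / \phi(a_0, b_0)$, a product of a function of $c$ and a function of $d$. Symmetrically, using $S \cap B' \ne \emptyset$ to obtain $(c_0, d_0)$ with $q(c_0)\,s(d_0) = 0$ and $\chi(c_0, d_0) \ne 0$, the function $\phi(a, b)$ factors as a function of $a$ times a function of $b$. Hence $G\ket{\psi}$ is a product of four single-cell states, and regrouping $A_1 \cup A_2 = A$ with $B_1 \cup B_2 = B$ exhibits $G\ket{\psi}$ as $A|B$-separable --- contradicting step one.

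The hard part is step two, especially producing the special points $(a_0, b_0)$ and $(c_0, d_0)$ at which the $\eta$ factor drops out. These are exactly where nonsimplification of $G$ on $\ket{\psi}$ and the nonemptiness of $S \cap A'$ and $S \cap B'$ combine most delicately, and they are what force $\ket{\phi}$ and $\ket{\chi}$ to factor further, ultimately collapsing a purportedly different $A'|B'$ back into the original $A|B$, where step one has already done the work.
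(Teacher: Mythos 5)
Your proposal is correct, and it takes a genuinely different route from the paper's proof. The paper assumes $\ket{\psi}$ and $G\ket{\psi}$ are separable across two possibly different bipartitions $A\disjunion B$ and $C\disjunion D$, refines into the four cells $S\cap A\cap C,\dots$, and derives a contradiction by (i) solving an explicit system of up to sixteen bilinear equations among amplitudes (the three appendix lemmas, with a three-way case split on which cells are empty) to show every ``test string'' has zero amplitude, and (ii) using nonsimplification to build a test string with nonzero amplitude. You instead split the work differently: first a Schmidt-rank argument --- writing $G = I + (\eta-1)P$ with $P$ the projector onto all-ones over $S$, decomposing each tensor factor orthogonally, and noting $\det\bigl(\begin{smallmatrix}1&1\\1&\eta\end{smallmatrix}\bigr)=\eta-1\ne 0$ --- shows $G\ket{\psi}$ cannot be separable across the \emph{same} bipartition $A\disjunion B$; then, for any other bipartition $A'\disjunion B'$, you evaluate the separability identity at support points where the $\eta$-exponent vanishes (such points exist precisely because $G$ does not simplify and $S\cap A'$, $S\cap B'$ are nonempty) to force both tensor factors of $G\ket{\psi}$ to factor further over the four cells, which regroups into $A\disjunion B$-separability and contradicts the first step. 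This reduction entirely replaces the appendix computations and needs no case analysis on empty cells (empty cells just make the corresponding bit-products trivially $1$), so your argument is shorter and more conceptual; the paper's elimination approach is more computational but yields some extra information along the way (e.g., which specific amplitude products vanish, as noted in its remarks). Both proofs share the same two underlying ingredients --- the four-cell refinement and the use of nonsimplification to produce support strings with prescribed zeros --- but assemble them quite differently.
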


\begin{proof}
Let $\ket{\p} := G\ket{\psi}$.
Since $G$ is represented by a diagonal matrix, for any string $x$, we have $|\braket{x}{\p}| = |\matelt{x}{G}{\psi}| = |\braket{x}{\psi}|$, so in particular, $\braket{x}{\p} = 0$ if and only if $\braket{x}{\psi} = 0$.

Suppose $\ket{\psi}$ and $\ket{\p}$ are both $S$-separable.  Write $\ket{\psi} = \ket{\psi}_A \otimes \ket{\psi}_B$, where $A\disjunion B = [n]$, $A$ and $B$ each have nonempty intersection with $S$, and $\ket{\psi}_A\in\cH_A$ and $\ket{\psi}_B\in\cH_B$ are unit vectors.  Likewise, write $\ket{\p} = \ket{\p}_C \otimes \ket{\p}_D$, for $C$ and $D$ where $C\disjunion D = [n]$, each have nonempty intersection with $S$, and $\ket{\p}_C\in\cH_C$ and $\ket{\p}_D\in\cH_D$ are unit vectors.

Now assume for the sake of contradiction that $G$ does not simplify on $\ket{\psi}$.  Then we have $G\ket{\psi} \ne \ket{\psi}$, and so there exists a string $u$ such that $u_{|S} = \1_{|S}$ and $\braket{u}{\psi} \ne 0$.  Fix such a $u$, noting that $G\ket{u} = \eta\,\ket{u}$.


We say that a string $\map{x}{[n]}{\{0,1\}}$ is a \emph{test string} if, for every nonempty $Y \in \{S\cap A\cap C, S\cap A\cap D, S\cap B\cap C, S\cap B\cap D\}$, there exists $i\in Y$ such that $x_i = 0$.  We will derive a contradiction in two steps: (1) show that $\braket{x}{\psi} = 0$ for every test string $x$; and (2) construct a test string $y$ such that $\braket{y}{\psi} \ne 0$.

To show step~(1), fix an arbitrary test string $x$.  We first chop $x$ into two parts in two different ways: (1) $x_{|A}$ and $x_{|B}$; (2) $x_{|C}$ and $x_{|D}$.  Each pair unions to $x$.  From $x_{|A}$ we get four strings $\map{x^A_{jk}}{A}{\{0,1\}}$ for $j,k\in\{0,1\}$ by changing some $0$-entries in $x_{|A}$ to $1$: Define
\begin{align*}
x^A_{00} &:= x_{|A}\;, &
x^A_{01} &:= x_{|A\cap C} \cup u_{|A\cap D}\;, \\
x^A_{10} &:= u_{|A\cap C} \cup x_{|A\cap D}\;, &
x^A_{11} &:= u_{|A}\;.
\end{align*}
We make similar definitions using $x_{|B}$, $x_{|C}$, and $x_{|D}$ with domains $B$, $C$, and $D$, respectively: Define
\begin{align*}
x^B_{00} &:= x_{|B}\;, &
x^B_{01} &:= x_{|B\cap C} \cup u_{|B\cap D}\;, \\
x^B_{10} &:= u_{|B\cap C} \cup x_{|B\cap D}\;, &
x^B_{11} &:= u_{|B}\;, \\ \\
x^C_{00} &:= x_{|C}\;, &
x^C_{01} &:= x_{|C\cap A} \cup u_{|C\cap B}\;, \\
x^C_{10} &:= u_{|C\cap A} \cup x_{|C\cap B}\;, &
x^C_{11} &:= u_{|C}\;, \\ \\
x^D_{00} &:= x_{|D}\;, &
x^D_{01} &:= x_{|D\cap A} \cup \1_{|D\cap B}\;, \\
x^D_{10} &:= u_{|D\cap A} \cup x_{|D\cap B}\;, &
x^D_{11} &:= u_{|D}\;.
\end{align*}
There are two things to observe about these definitions:
\begin{enumerate}
\item
We have $x^A_{00}\cup x^B_{00} = x^C_{00}\cup x^D_{00} = x$.
\item
For all $j,k,\ell,m\in\{0,1\}$,
\begin{equation}\label{eqn:ABCD-strings}
x^A_{jk}\cup x^B_{\ell m} = x^C_{j\ell}\cup x^D_{km}\;.
\end{equation}
For example, for all $i\in [n]$, we have
\[ (x^A_{00}\cup x^B_{10})_i = (x^C_{01}\cup x^D_{00})_i = \left\{ \begin{array}{ll}
u_i & \mbox{if $i \in B\cap C$,} \\
x_i & \mbox{otherwise.}
\end{array} \right. \]
\end{enumerate}

We now consider only the coefficients in $\ket{\psi}_A$, $\ket{\psi}_B$, $\ket{\psi}_C$, and $\ket{\psi}_D$ of the basis vectors given above.  For all $j,k\in\{0,1\}$, define
\begin{align*}
a_{jk} &:= \braket{x^A_{jk}}{\psi}_A &\mbox{(scalar product in $\cH_A$),} \\
b_{jk} &:= \braket{x^B_{jk}}{\psi}_B &\mbox{(scalar product in $\cH_B$),} \\
c_{jk} &:= \braket{x^C_{jk}}{\p}_C &\mbox{(scalar product in $\cH_C$),} \\
d_{jk} &:= \braket{x^D_{jk}}{\p}_D &\mbox{(scalar product in $\cH_D$).}
\end{align*}
For example, $a_{jk}$ is the coefficient of $\ket{x^A_{jk}}$ in the expansion of $\ket{\psi}_A$ in terms of basis vectors in $\cH_A$.  (The $a_{jk}$, $b_{jk}$, $c_{jk}$, and $d_{jk}$ may depend on the particular choice of test string $x$.)

Recalling that $\ket{\psi} = \ket{\psi}_A \otimes \ket{\psi}_B$ and $\ket{\p} = \ket{\p}_C \otimes \ket{\p}_D$, we get, for all $j,k,\ell,m\in\{0,1\}$,
\begin{align}\label{eqn:case-1-AB}
\braket{x^A_{jk}\cup x^B_{\ell m}}{\psi} &= (\bra{x^A_{jk}}\otimes\bra{x^B_{\ell m}})(\ket{\psi}_A \otimes \ket{\psi}_B) = \braket{x^A_{jk}}{\psi}_A\braket{x^B_{\ell m}}{\psi}_B = a_{jk}b_{\ell m}\;, \\ \label{eqn:case-1-CD}
\braket{x^C_{jk}\cup x^D_{\ell m}}{\p} &= (\bra{x^C_{jk}}\otimes\bra{x^D_{\ell m}})(\ket{\p}_C \otimes \ket{\p}_D) = \braket{x^C_{jk}}{\p}_A\braket{x^D_{\ell m}}{\p}_B = c_{jk}d_{\ell m}\;.
\end{align}
Then by observation~(1), if we can show that $a_{00}b_{00} = 0$, then $\braket{x}{\psi} = a_{00}b_{00} = 0$ for any test string $x$.

For any string $\map{x}{[n]}{\{0,1\}}$, if there exists $i\in S$ such that $x_i = 0$, then $G\ket{x} = \ket{x}$, and if $x_i = 1$ for all $i\in S$, then $G\ket{x} = \eta\,\ket{x}$.  This fact gives us equations among the $a_{jk},b_{jk},c_{jk},d_{jk}$ by comparing amplitudes in $\ket{\psi}$ versus $\ket{\p}$.  Which equations we get depends on which of the sets $S\cap A\cap C$, $S\cap A\cap D$, $S\cap B\cap C$, and $S\cap B\cap D$ are empty.  At most two of these sets can be empty, so we have three cases.

\paragraph{Case~1.} $S\cap A\cap C$, $S\cap A\cap D$, $S\cap B\cap C$, and $S\cap B\cap D$ are all nonempty.

In this case, $x^C_{11}\cup x^D_{11} = u$, and if $jk\ell m = 0$ then $x^A_{j\ell}\cup x^D_{km}$ has a $0$ somewhere in $S$.  Using observation~(2) above, we then get
\begin{equation}\label{eqn:case-1-G-on-basis}
G\ket{x^A_{jk}\cup x^B_{\ell m}} = G\ket{x^C_{j\ell}\cup x^D_{km}} = \left\{ \begin{array}{ll}
\eta\,\ket{x^C_{j\ell}\cup x^D_{km}} & \mbox{if $j=k=\ell=m=1$,} \\
\ket{x^C_{j\ell}\cup x^D_{km}} & \mbox{if $jk\ell m = 0$.}
\end{array}\right.
\end{equation}
Then combining Equations~(\ref{eqn:ABCD-strings},\ref{eqn:case-1-AB},\ref{eqn:case-1-CD},\ref{eqn:case-1-G-on-basis}) and the fact that $\ket{\p} = G\ket{\psi}$, we get 16 equations: for all $j,k,\ell,m\in\{0,1\}$,
\begin{equation}\label{eqn:case-1}
c_{j\ell}d_{km} = \left\{ \begin{array}{ll}
\eta\,a_{jk}b_{\ell m} & \mbox{if $j=k=\ell=m=1$,} \\
a_{jk}b_{\ell m} & \mbox{if $jk\ell m = 0$.}
\end{array}\right.
\end{equation}

By assumption, $\braket{u}{\psi} \ne 0$, and so $\braket{u}{\psi} = a_{11}b_{11} \ne 0$, and $c_{11}d_{11} = \eta\,a_{11}b_{11} \ne 0$ as well.  This fact together with Equation~(\ref{eqn:case-1}) implies
$a_{00}b_{00} = 0$ by Lemma~\ref{lem:4-sets} in Appendix~\ref{sec:calculations}.

\paragraph{Case~2.}  One of $S\cap A\cap C$, $S\cap A\cap D$, $S\cap B\cap C$, and $S\cap B\cap D$ is empty and the other three are nonempty.  Without loss of generality, we assume that $S\cap B\cap C = \emptyset$.

In this case, $x^C_{j\ell}\cup x^D_{km} = u$ if $j = k = m = 1$ (independent of $\ell$, because the test string $x$ has no $0$ in $S\cap B\cap C$), and otherwise if $jkm = 0$, we get that $x^A_{j\ell}\cup x^D_{km}$ has a $0$ somewhere in $S$.  Thus
\begin{equation}\label{eqn:case-2-G-on-basis}
G\ket{x^A_{jk}\cup x^B_{\ell m}} = G\ket{x^C_{j\ell}\cup x^D_{km}} = \left\{ \begin{array}{ll}
\eta\,\ket{x^C_{j\ell}\cup x^D_{km}} & \mbox{if $j=k=m=1$,} \\
\ket{x^C_{j\ell}\cup x^D_{km}} & \mbox{if $jkm = 0$.}
\end{array}\right.
\end{equation}
Then setting $\ell := 0$ we get eight equations: for all $j,k,m\in\{0,1\}$,
\begin{equation}\label{eqn:case-2}
c_{j0}d_{km} = \left\{ \begin{array}{ll}
\eta\,a_{jk}b_{0m} & \mbox{if $j=k=m=1$,} \\
a_{jk}b_{0m} & \mbox{if $jkm = 0$.}
\end{array}\right.
\end{equation}
These equations again imply $a_{00}b_{00} = 0$ by Lemma~\ref{lem:3-sets} in Appendix~\ref{sec:calculations}.

\paragraph{Case~3.}  Two of $S\cap A\cap C$, $S\cap A\cap D$, $S\cap B\cap C$, and $S\cap B\cap D$ are empty.  Without loss of generality, we assume that $S\cap A\cap D = S\cap B\cap C = \emptyset$, whence $S\cap A = S\cap C$ and $S\cap B = S\cap D$, and both are nonempty.  We argue analogously to Cases~1 and 2.

In this case, $x^C_{j\ell}\cup x^D_{km} = u$ if $j = m = 1$ (independent of $k$ and $\ell$), and otherwise if $jm = 0$, we get that $x^A_{j\ell}\cup x^D_{km}$ has a $0$ somewhere in $S$.  Thus
\begin{equation}\label{eqn:case-3-G-on-basis}
G\ket{x^A_{jk}\cup x^B_{\ell m}} = G\ket{x^C_{j\ell}\cup x^D_{km}} = \left\{ \begin{array}{ll}
\eta\,\ket{x^C_{j\ell}\cup x^D_{km}} & \mbox{if $j=m=1$,} \\
\ket{x^C_{j\ell}\cup x^D_{km}} & \mbox{if $jm = 0$.}
\end{array}\right.
\end{equation}
Then setting $k := \ell := 0$ we get four equations: for all $j,m\in\{0,1\}$,
\begin{equation}\label{eqn:case-3}
c_{j0}d_{0m} = \left\{ \begin{array}{ll}
\eta\,a_{j0}b_{0m} & \mbox{if $j=m=1$,} \\
a_{j0}b_{0m} & \mbox{if $jm = 0$.}
\end{array}\right.
\end{equation}
These equations also imply $a_{00}b_{00} = 0$ by Lemma~\ref{lem:2-sets} in Appendix~\ref{sec:calculations}.

This establishes step~(1) in the contradiction proof.

For step~(2), we now construct a test string $y$ such that $\braket{y}{\psi} \ne 0$.  We first show the construction assuming Case~1 above, then modify it slightly for Cases~2 and 3.

Assume Case~1.  Choose some $i\in S\cap A\cap C$.  Since $G$ does not simplify on $\ket{\psi}$, there exists a string $y_{AC}$ (with domain $[n]$) such that $\braket{y_{AC}}{\psi} \ne 0$ and $(y_{AC})_i = 0$.  Then since $G$ fixes $\ket{y_{AC}}$, we have
\[ 0 \ne \braket{y_{AC}}{\psi} = \braket{y_{AC}}{\p} = \braket{(y_{AC})_{|C}\cup (y_{AC})_{|D}}{\p} = \braket{(y_{AC})_{|C}}{\p}_C\braket{(y_{AC})_{|D}}{\p}_D\;. \]
In particular, $\braket{(y_{AC})_{|C}}{\p}_C \ne 0$.  Now we can choose some string $y_{AD}$ such that $(y_{AD})_i = 0$ for some $i\in S\cap A\cap D$.  Analogously to the above, we get
\[ 0 \ne \braket{y_{AD}}{\psi} = \braket{y_{AD}}{\p} = \braket{(y_{AD})_{|C}
\cup(y_{AD})_{|D}}{\p} = \braket{(y_{AD})_{|C}}{\p}_C\braket{(y_{AD})_{|D}}{\p}_D\;. \]
In particular, $\braket{(y_{AD})_{|D}}{\p}_D \ne 0$.  Now define the string
\[ y_A := (y_{AC})_{|C} \cup (y_{AD})_{|D}\;. \]
Note that $(y_A)_i = (y_A)_j = 0$ for some $i\in S\cap A\cap C$ and $j\in S\cap A\cap D$.  Furthermore,
\[ \braket{y_A}{\psi} = \braket{y_A}{\p} = \braket{(y_{AC})_{|C}}{\p}_C \braket{(y_{AD})_{|D}}{\p}_D \ne 0\;. \]

By exactly repeating the argument in the previous paragraph with $B$ substituted for $A$, we obtain a string $y_B$ such that $(y_B)_i = (y_B)_j = 0$ for some $i\in S\cap B\cap C$ and $j\in S\cap B\cap D$, and furthermore, $\braket{y_B}{\psi} \ne 0$.

Finally, let $y := (y_A)_{|A} \cup (y_B)_{|B}$.  Observe that $y$ is a test string and that
\[ \braket{y}{\psi} = \braket{y_{|A}\cup y_{|B}}{\psi} = \braket{y_{|A}}{\psi}_A \braket{y_{|B}}{\psi}_B \ne 0\;. \]
This concludes the proof for Case~1.

Assume Case~2.  Using an identical construction to that of Case~1, we obtain a string $y_A$ such that $\braket{y_A}{\psi} \ne 0$ and $(y_A)_i = (y_A)_j = 0$ for some $i\in S\cap A\cap C$ and $j\in S\cap A\cap D$.  Let $y_B$ be any string such that $\braket{y_B}{\psi} \ne 0$ and $(y_B)_i = 0$ for some $i\in S\cap B$.  Such a string exists by the assumption that $G$ does not simplify on $\ket{\psi}$.  Now letting $y := (y_A)_{|A} \cup (y_B)_{|B}$ as in Case~1, we observe that $y$ is a test string and that
\[ \braket{y}{\psi} = \braket{y_{|A}\cup y_{|B}}{\psi} = \braket{y_{|A}}{\psi}_A \braket{y_{|B}}{\psi}_B \ne 0\;. \]
This concludes the proof of Case~2.

Assume Case~3.  Let $y_A$ be any string such that $\braket{y_A}{\psi} \ne 0$ and $(y_A)_i = 0$ for some $i\in S\cap A$.  Let $y_B$ be any string such that $\braket{y_B}{\psi} \ne 0$ and $(y_B)_i = 0$ for some $i\in S\cap B$.  Both strings exist by the assumption that $G$ does not simplify on $\ket{\psi}$.  Now letting $y := (y_A)_{|A} \cup (y_B)_{|B}$ as in Cases~1 and 2, we observe that $y$ is a test string and that
\[ \braket{y}{\psi} = \braket{y_{|A}\cup y_{|B}}{\psi} = \braket{y_{|A}}{\psi}_A \braket{y_{|B}}{\psi}_B \ne 0\;. \]
This concludes the proof of Case~3.
\end{proof}

\section{Acknowledgments}

We would like to thank Daniel Grier, Daniel Pad\'{e}, and Thomas Thierauf for interesting and useful discussions.

\bibliographystyle{plain}
\bibliography{../../research/bib/master}

\begin{thebibliography}{1}

\bibitem{KLM:quantum-book}
P.~Kaye, R.~Laflamme, and M.~Mosca.
\newblock {\em An Introduction to Quantum Computing}.
\newblock Oxford University Press, 2007.

\bibitem{KSV:quantum-book}
A.~Yu.\ Kitaev, A.~H. Shen, and M.~N. Vyalyi.
\newblock {\em Classical and quantum computation}.
\newblock American Mathematical Society, Providence, RI, 2002.

\bibitem{NC:quantumbook}
M.~A. Nielsen and I.~L. Chuang.
\newblock {\em Quantum Computation and Quantum Information}.
\newblock Cambridge University Press, 2000.

\end{thebibliography}

\newpage

\appendix

\section{Calculations}
\label{sec:calculations}

\begin{lemma}\label{lem:4-sets}
Let $\eta\in\complexes$ be such that $\eta \ne 1$.  Let complex numbers $a_{jk}$, $b_{jk}$, $c_{jk}$, and $d_{jk}$ for $j,k\in\{0,1\}$ satisfy
\begin{align}\label{eqn:4-vals-all-1}
a_{11}b_{11} &= \eta\, c_{11}d_{11}\;, \\ \label{eqn:4-vals-not-all-1}
a_{jk}b_{\ell m} &= c_{j\ell}d_{km}
\end{align}
for all $j,k,\ell,m\in\{0,1\}$ such that $jk\ell m = 0$.  If $a_{11}$ and $b_{11}$ are nonzero, then either $c_{00} = c_{01} = c_{10} = 0$ or $d_{00} = d_{01} = d_{10} = 0$.  It follows that for all $r,s\in\{0,1\}$,
\begin{equation}\label{eqn:4-vals-rs}
a_{r0}b_{0s} = a_{0r}b_{s0} = c_{r0}d_{0s} = c_{0r}d_{s0} = 0\;.
\end{equation}
\end{lemma}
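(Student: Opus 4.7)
The plan is to combine a small number of the sixteen given equations, exploiting $\eta\ne 1$ at a single crucial point, to force $c_{01}d_{10}=c_{10}d_{01}=0$; everything else is bookkeeping.

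First, since $a_{11}b_{11}\ne 0$ and $a_{11}b_{11}=\eta c_{11}d_{11}$, the coefficients $c_{11}$ and $d_{11}$ are nonzero. I would then isolate three instances of (\ref{eqn:4-vals-not-all-1}): from $(j,k,\ell,m)=(0,1,1,1)$ obtain $a_{01}=c_{01}d_{11}/b_{11}$; from $(1,1,1,0)$ obtain $b_{10}=c_{11}d_{10}/a_{11}$; then substitute these into the $(0,1,1,0)$ equation $a_{01}b_{10}=c_{01}d_{10}$ to get $c_{01}d_{10}\bigl(c_{11}d_{11}/(a_{11}b_{11})-1\bigr)=0$. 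The parenthetical quantity equals $\eta^{-1}-1$, which is nonzero, so $c_{01}d_{10}=0$. The mirror argument using $(1,0,1,1)$, $(1,1,0,1)$, and $(1,0,0,1)$ gives $c_{10}d_{01}=0$.

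Next, the remaining equations yield the closed-form expressions
\[
c_{00}=\frac{c_{01}c_{10}d_{11}}{a_{11}b_{11}},\qquad d_{00}=\frac{c_{11}d_{01}d_{10}}{a_{11}b_{11}},
\]
obtained by combining $(0,1,0,1)$ (respectively $(1,0,1,0)$) with the formulas already derived for $a_{01}$ and $b_{01}$ (resp.\ $a_{10}$ and $b_{10}$). I then case-split on how $c_{01}d_{10}=c_{10}d_{01}=0$ is realized. If $c_{01}=c_{10}=0$, the first formula gives $c_{00}=0$ and the first disjunct of the lemma holds; if $d_{01}=d_{10}=0$, the second formula gives $d_{00}=0$ and the second disjunct holds. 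In the two ``mixed'' sub-cases ($c_{01}=d_{01}=0$ or $c_{10}=d_{10}=0$) both $c_{00}$ and $d_{00}$ vanish. In every sub-case each product appearing in (\ref{eqn:4-vals-rs}) contains a factor already shown to be zero, using the identifications $a_{r0}b_{0s}=c_{r0}d_{0s}$ and $a_{0r}b_{s0}=c_{0s}d_{r0}$ supplied by (\ref{eqn:4-vals-not-all-1}).

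The main obstacle is locating the particular three-equation combination in the first step whose substitution collapses to the factor $\eta^{-1}-1$; once that telescoping is spotted, the rest is a mechanical enumeration over the four sub-cases. Without invoking $\eta\ne 1$ in exactly this way there is no way to separate the ``rank-$1$ in the partition $jk,\ell m$'' structure from the ``rank-$1$ in the partition $j\ell,km$'' structure that the two equation families jointly encode, which is precisely the tension that the twist at $(1,1,1,1)$ exploits.
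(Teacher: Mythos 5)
Your algebra checks out, and your route is essentially the paper's elimination executed more economically: the paper also solves for the $b_{\ell m}$ and then the $a_{jk}$ in terms of the $c$'s and $d$'s and lets the twist at $(1,1,1,1)$ produce the factor $\eta-1$ multiplying $c_{01}d_{10}$ and $c_{10}d_{01}$; you reach the same two vanishing products via a three-equation telescoping each instead of twelve intermediate identities. Your closed forms $c_{00}=c_{01}c_{10}d_{11}/(a_{11}b_{11})=c_{01}c_{10}/(\eta c_{11})$ and $d_{00}=d_{01}d_{10}/(\eta d_{11})$ are precisely the two identities the paper derives but remarks it does not use; you put them to work in the case split. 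In each of your four sub-cases every product in (\ref{eqn:4-vals-rs}) does acquire a factor already shown to vanish, and the identifications $a_{r0}b_{0s}=c_{r0}d_{0s}$ and $a_{0r}b_{s0}=c_{0s}d_{r0}$ are legitimate instances of (\ref{eqn:4-vals-not-all-1}), so the final displayed conclusion -- the only part of the lemma invoked in the proof of Theorem~\ref{thm:main} -- is fully established.

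The one point to flag concerns the lemma's intermediate claim that either $c_{00}=c_{01}=c_{10}=0$ or $d_{00}=d_{01}=d_{10}=0$. In your two mixed sub-cases you obtain only $c_{00}=d_{00}=0$, which yields neither disjunct -- and no argument can close this, because the disjunction is false as stated. Take $\eta=-1$, $a_{11}=c_{10}=c_{11}=d_{10}=d_{11}=b_{00}=b_{01}=b_{10}=1$, $b_{11}=-1$, and all remaining coefficients $0$: all sixteen hypotheses hold and $a_{11}b_{11}\ne0$, yet $c_{10}$ and $d_{10}$ are both nonzero ($c_{10}d_{10}=a_{11}b_{00}$ is simply not among the products the equations force to vanish). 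The paper's own proof has the same blind spot, treating only the cases $c_{00}\ne0$ and $d_{00}\ne0$ and saying nothing when both are zero. So your instinct to bypass the disjunction in the mixed sub-cases and verify (\ref{eqn:4-vals-rs}) directly is the right one; the defect lies in the lemma's statement, whose intermediate claim should be replaced by the list of products that actually must vanish, namely $c_{01}d_{10}=c_{10}d_{01}=0$ together with $c_{00}d_{jk}=c_{jk}d_{00}=0$ for all $jk=0$, from which (\ref{eqn:4-vals-rs}) follows exactly as you argue.
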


\begin{proof}
If $a_{11}b_{11} \ne 0$, then by Equation~(\ref{eqn:4-vals-all-1}) we have $\eta$, $c_{11}$, and $d_{11}$ are all nonzero as well.  Letting $j := k := 1$ in Equations~(\ref{eqn:4-vals-all-1},\ref{eqn:4-vals-not-all-1}), we can solve for each $b_{\ell m}$ in terms of the other quantities:
\begin{align*}
b_{00} &= c_{10}d_{10}/a_{11} & b_{01} &= c_{10}d_{11}/a_{11} \\
b_{10} &= c_{11}d_{10}/a_{11} & b_{11} &= \eta\,c_{11}d_{11}/a_{11}
\end{align*}
Substituting these values into the other 12 equations (where $jk = 0$) and simplifying, we get
\begin{align*}
a_{00}c_{10}d_{10} &= a_{11}c_{00}d_{00} & a_{01}c_{10}d_{10} &= a_{11}c_{00}d_{10} & a_{10}c_{10}d_{10} &= a_{11}c_{10}d_{00} \\
a_{00}c_{10}d_{11} &= a_{11}c_{00}d_{01} & a_{01}c_{10} &= a_{11}c_{00} & a_{10}c_{10}d_{11} &= a_{11}c_{10}d_{01} \\
a_{00}c_{11}d_{10} &= a_{11}c_{01}d_{00} & a_{01}c_{11}d_{10} &= a_{11}c_{01}d_{10} & a_{10}d_{10} &= a_{11}d_{00} \\
\eta\,a_{00}c_{11}d_{11} &= a_{11}c_{01}d_{01} & \eta\,a_{01}c_{11} &= a_{11}c_{01} & \eta\,a_{10}d_{11} &= a_{11}d_{01}
\end{align*}
Using the three equations on the bottom row, we solve for $a_{00}$, $a_{01}$, and $a_{10}$:
\begin{align*}
a_{00} &= \frac{a_{11}c_{01}d_{01}}{\eta\,c_{11}d_{11}} & a_{01} &= \frac{a_{11}c_{01}}{\eta\,c_{11}} & a_{10} &= \frac{a_{11}d_{01}}{\eta\,d_{11}}
\end{align*}
and plug these values into the remaining nine equations and simplify to get
\begin{align*}
c_{01}c_{10}d_{01}d_{10} &= \eta\,c_{00}c_{11}d_{00}d_{11} & c_{01}c_{10}d_{01} &= \eta\,c_{00}c_{11}d_{10} & c_{10}d_{01}d_{10} &= \eta\,c_{10}d_{00}d_{11} \\
c_{01}c_{10}d_{01} &= \eta\,c_{00}c_{11}d_{01} & c_{01}c_{10} &= \eta\,c_{00}c_{11} & c_{10}d_{01} &= \eta\,c_{10}d_{01} \\
c_{01}d_{01}d_{10} &= \eta\,c_{01}d_{00}d_{11} & c_{01}d_{10} &= \eta\,c_{01}d_{10} & d_{01}d_{10} &= \eta\,d_{00}d_{11}
\end{align*}
Noting that $\eta \ne 1$, from the last equation on the second row and the second equation on the last row we get
\[ c_{10}d_{01} = c_{01}d_{10} = 0\;. \]
Substituting these values into the equations on the first row and first column, we get for the seven remaining equations
\begin{align*}
c_{00}d_{00} &= 0 & c_{00}d_{10} &= 0 & c_{10}d_{00} &= 0 \\
c_{00}d_{01} &= 0 & c_{01}c_{10} &= \eta\,c_{00}c_{11} & & \\
c_{01}d_{00} &= 0 & & & d_{01}d_{10} &= \eta\,d_{00}d_{11}
\end{align*}
Suppose $c_{00} \ne 0$.  Then the top left equation and its two adjacent equations imply $d_{00} = d_{01} = d_{10} = 0$.  Symmetrically, if $d_{00} \ne 0$, then the corner equations imply $c_{00} = c_{01} = c_{10} = 0$.  Combining this fact with Equation~(\ref{eqn:4-vals-not-all-1}) gives us Equation~(\ref{eqn:4-vals-rs}).
\end{proof}

\begin{remark}
The proof above did not use the two equations $c_{01}c_{10} = \eta\,c_{00}c_{11}$ and $d_{01}d_{10} = \eta\,d_{00}d_{11}$.  They show that $c_{00}$ is uniquely determined by the other $c$'s and $\eta$.  Also, if $c_{00} \ne 0$, then $c_{01} \ne 0$ and $c_{10} \ne 0$, and conversely.  Similarly for the $d$'s.
\end{remark}

\begin{lemma}\label{lem:3-sets}
Let $\eta\in\complexes$ be such that $\eta \ne 1$.  Let complex numbers $a_{jk}$, $b_{j}$, $c_{j}$, and $d_{jk}$ for $j,k\in\{0,1\}$ satisfy
\begin{align}\label{eqn:3-vals-all-1}
a_{11}b_{1} &= \eta\, c_{1}d_{11}\;, \\ \label{eqn:3-vals-not-all-1}
a_{jk}b_{m} &= c_{j}d_{km}
\end{align}
for all $j,k,m\in\{0,1\}$ such that $jkm = 0$.  If $a_{11}$ and $b_{1}$ are nonzero, then either $c_{0} = 0$ or $d_{00} = d_{10} = 0$.  Thus
\begin{equation}\label{eqn:3-vals-rs}
a_{00}b_{0} = c_{0}d_{00} = a_{01}b_{0} = c_{0}d_{10} = 0\;.
\end{equation}
\end{lemma}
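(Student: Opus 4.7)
The plan is to mirror the strategy of the proof of Lemma~\ref{lem:4-sets}, exploiting the fact that an $\eta$-carrying equation cannot coexist with an $\eta$-free variant of itself unless a common factor vanishes. The system here is much smaller --- the seven instances of~(\ref{eqn:3-vals-not-all-1}) (one for each $(j,k,m)\in\{0,1\}^3$ with $jkm=0$) plus~(\ref{eqn:3-vals-all-1}) --- so one case split should suffice. First I would observe that the hypothesis $a_{11}b_1\ne 0$ together with~(\ref{eqn:3-vals-all-1}) forces $\eta$, $c_1$, and $d_{11}$ all to be nonzero, making these safe to use as divisors throughout.

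The case split is on whether $d_{10}=0$. If so, the $(j,k,m)=(1,1,0)$ instance of~(\ref{eqn:3-vals-not-all-1}) reads $a_{11}b_0=c_1 d_{10}=0$, which forces $b_0=0$ since $a_{11}\ne 0$; then the $(1,0,0)$ instance $a_{10}b_0=c_1 d_{00}$ together with $c_1\ne 0$ yields $d_{00}=0$, giving the second disjunct $d_{00}=d_{10}=0$. Otherwise $d_{10}\ne 0$, and the same $(1,1,0)$ equation gives $b_0=c_1 d_{10}/a_{11}\ne 0$. The key step is then to equate the two expressions for $a_{01}$ obtained from the $(0,1,0)$ instance $a_{01}b_0=c_0 d_{10}$ and the $(0,1,1)$ instance $a_{01}b_1=c_0 d_{11}$, using $b_1=\eta c_1 d_{11}/a_{11}$ from~(\ref{eqn:3-vals-all-1}) to eliminate $b_1$. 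This yields a relation of the shape $c_0\cdot(\text{nonzero})\cdot(\eta-1)=0$, and $\eta\ne 1$ forces $c_0=0$, which is the first disjunct.

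The conclusion~(\ref{eqn:3-vals-rs}) then drops out in either case: each of $a_{00}b_0$, $c_0 d_{00}$, $a_{01}b_0$, and $c_0 d_{10}$ is either $c_0$ multiplied by some quantity (and hence zero under the first disjunct) or has $d_{00}$ or $d_{10}$ as a factor after invoking the $(0,0,0)$ or $(0,1,0)$ instance of~(\ref{eqn:3-vals-not-all-1}) (and hence zero under the second disjunct). The only obstacle I anticipate is bookkeeping --- tracking the case split carefully so that no substitution divides by an unverified zero --- but there is no conceptual difficulty beyond the single $\eta\ne 1$ lever already employed in Lemma~\ref{lem:4-sets}.
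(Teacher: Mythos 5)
Your proof is correct and rests on the same lever as the paper's: after using $a_{11}b_1\ne 0$ to make $\eta$, $c_1$, and $d_{11}$ invertible, you arrive (after clearing those nonzero factors) at exactly the paper's key relation $\eta\,c_0d_{10}=c_0d_{10}$ and finish with $\eta\ne 1$. Your up-front case split on $d_{10}=0$ simply replaces the paper's wholesale elimination of $b_0,b_1,a_{00},a_{01},a_{10}$ with a targeted use of the $(1,1,0)$, $(1,0,0)$, $(0,1,0)$, and $(0,1,1)$ instances of~(\ref{eqn:3-vals-not-all-1}); it is the same computation with tidier bookkeeping.
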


\begin{proof}
If $a_{11}b_{1} \ne 0$, then $\eta$, $c_{1}$, and $d_{11}$ are all nonzero as well.  Letting $j := k := 1$ in Equations~(\ref{eqn:3-vals-all-1},\ref{eqn:3-vals-not-all-1}), we can solve for each $b_{m}$ in terms of the other quantities:
\begin{align*}
b_{0} &= c_{1}d_{10}/a_{11} & b_{1} &= \eta\,c_{1}d_{11}/a_{11}
\end{align*}
Substituting these values into the other six equations (where $jk = 0$) and simplifying, we get
\begin{align*}
a_{00}c_{1}d_{10} &= a_{11}c_{0}d_{00} & a_{01}c_{1}d_{10} &= a_{11}c_{0}d_{10} & a_{10}d_{10} &= a_{11}d_{00} \\
\eta\,a_{00}c_{1}d_{11} &= a_{11}c_{0}d_{01} & \eta\,a_{01}c_{1} &= a_{11}c_{0} & \eta\,a_{10}d_{11} &= a_{11}d_{01}
\end{align*}
Using the three equations on the bottom row, we solve for $a_{00}$, $a_{01}$, and $a_{10}$:
\begin{align*}
a_{00} &= \frac{a_{11}c_{0}d_{01}}{\eta\,c_{1}d_{11}} & a_{01} &= \frac{a_{11}c_{0}}{\eta\,c_{1}} & a_{10} &= \frac{a_{11}d_{01}}{\eta\,d_{11}}
\end{align*}
and plug these values into the remaining three equations and simplify to get
\begin{align*}
c_{0}d_{01}d_{10} &= \eta\,c_{0}d_{00}d_{11} & c_{0}d_{10} &= \eta\,c_{0}d_{10} & d_{01}d_{10} &= \eta\,d_{00}d_{11}
\end{align*}
Noting that $\eta \ne 1$, from the middle equation we get that
\begin{equation}\label{eqn:c0d10}
c_{0}d_{10} = 0\;.
\end{equation}
Substituting these values into the first equation gives
\begin{align}\label{eqn:c0d00}
c_0d_{00} &= 0 & d_{01}d_{10} &= \eta\,d_{00}d_{11}
\end{align}
If $c_{0} \ne 0$, then $d_{00} = d_{10} = 0$ by (\ref{eqn:c0d10},\ref{eqn:c0d00}).  Combining this fact with Equation~(\ref{eqn:3-vals-not-all-1}) gives us Equation~(\ref{eqn:3-vals-rs}).
\end{proof}

\begin{remark}
The unused second equation of (\ref{eqn:c0d00}) shows that $d_{00}$ is uniquely determined by the other $d$'s and $\eta$.  Also, if $d_{00} \ne 0$, then $d_{01} \ne 0$ and $d_{10} \ne 0$, and conversely.
\end{remark}

\begin{lemma}\label{lem:2-sets}
Let $\eta\in\complexes$ be such that $\eta \ne 1$.  Let complex numbers $a_{j}$, $b_{j}$, $c_{j}$, and $d_{j}$ for $j\in\{0,1\}$ satisfy
\begin{align}\label{eqn:2-vals-all-1}
a_{1}b_{1} &= \eta\, c_{1}d_{1}\;, \\ \label{eqn:2-vals-not-all-1}
a_{j}b_{m} &= c_{j}d_{m}
\end{align}
for all $j,m\in\{0,1\}$ such that $jm = 0$.  If $a_{1}$ and $b_{1}$ are nonzero, then
\begin{equation}\label{eqn:2-vals-rs}
a_{0}b_{0} = c_{0}d_{0} = 0\;.
\end{equation}
\end{lemma}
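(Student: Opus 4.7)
The plan is to mirror the strategy used for Lemma~\ref{lem:3-sets} and Lemma~\ref{lem:4-sets}, but the system here is so small that I expect a one-step elimination to finish the argument, with essentially no bookkeeping overhead.

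First I would observe that the hypothesis $a_1 b_1 \ne 0$ together with Equation~(\ref{eqn:2-vals-all-1}) forces $c_1 d_1 \ne 0$, since $\eta \ne 0$; in particular, $c_1 \ne 0$ and $d_1 \ne 0$, which lets me divide by these quantities later. The relations in Equation~(\ref{eqn:2-vals-not-all-1}) to be used are the three equations obtained from $jm = 0$, namely $a_0 b_1 = c_0 d_1$, $a_1 b_0 = c_1 d_0$, and $a_0 b_0 = c_0 d_0$.

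Next I would solve the first two of these mixed equations for $a_0$ and $b_0$ in terms of the remaining quantities: $a_0 = c_0 d_1 / b_1$ and $b_0 = c_1 d_0 / a_1$. Multiplying these and using Equation~(\ref{eqn:2-vals-all-1}) to replace $a_1 b_1$ by $\eta\, c_1 d_1$ yields
\[
a_0 b_0 \;=\; \frac{c_0 c_1 d_0 d_1}{a_1 b_1} \;=\; \frac{c_0 c_1 d_0 d_1}{\eta\, c_1 d_1} \;=\; \frac{c_0 d_0}{\eta}\;.
\]
Comparing this with the third mixed equation $a_0 b_0 = c_0 d_0$ gives $c_0 d_0 = c_0 d_0 / \eta$, i.e., $(1 - 1/\eta)\, c_0 d_0 = 0$. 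Since $\eta \ne 1$, this forces $c_0 d_0 = 0$, and then $a_0 b_0 = c_0 d_0 = 0$ as required by Equation~(\ref{eqn:2-vals-rs}).

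There is really no main obstacle here: the only thing to be careful about is confirming that each division is legal, which is exactly what the nonvanishing of $a_1, b_1, c_1, d_1$ (and $\eta$) guarantees. The role played by the nontrivial phase $\eta \ne 1$ is the same as in the larger lemmas, manifesting as the single scalar discrepancy between the ``all-ones'' equation and the rest of the system.
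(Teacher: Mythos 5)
Your proof is correct and follows essentially the same elimination strategy as the paper's: both isolate the off-diagonal unknowns using the nonvanishing of $a_1,b_1,c_1,d_1$ and reduce the $(0,0)$ equation to $c_0d_0=\eta\,c_0d_0$ (equivalently $c_0d_0=c_0d_0/\eta$), which forces $c_0d_0=0$ since $\eta\ne 1$. The only cosmetic point is that $\eta\ne 0$ is not a hypothesis of the lemma but is itself a consequence of $a_1b_1=\eta\,c_1d_1\ne 0$, as the paper notes.
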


\begin{proof}
If $a_{1}b_{1} \ne 0$, then $\eta$, $c_{1}$, and $d_{1}$ are all nonzero as well.  Letting $j := 1$ in Equations~(\ref{eqn:2-vals-all-1},\ref{eqn:2-vals-not-all-1}), we can solve for each $b_{m}$ in terms of the other quantities:
\begin{align*}
b_{0} &= c_{1}d_{0}/a_{1} & b_{1} &= \eta\,c_{1}d_{1}/a_{1}
\end{align*}
Substituting these values into the other two equations (where $j = 0$) and simplifying, we get
\begin{align*}
a_{0}c_{1}d_{0} &= a_{1}c_{0}d_{0} & \eta\,a_{0}c_{1} &= a_{1}c_{0}
\end{align*}
We use the second equation to solve for $a_{0}$:
\begin{align*}
a_{0} &= \frac{a_{1}c_{0}}{\eta\,c_{1}}
\end{align*}
and plug this value into the first equation and simplify to get
\begin{align*}
c_{0}d_{0} &= \eta\,c_{0}d_{0}
\end{align*}
Noting that $\eta \ne 1$, we get that
\begin{equation}\label{eqn:c0d0}
c_{0}d_{0} = 0\;.
\end{equation}
Combining Equations~(\ref{eqn:2-vals-not-all-1},\ref{eqn:c0d0}) gives us Equation~(\ref{eqn:2-vals-rs}).
\end{proof}

\end{document}